\newtheorem{prop}{Proposition}
\begin{document}
\title{Perturbing Inputs to Prevent Model Stealing}

\author{Justin Grana\thanks{      \copyright 2020 IEEE. Personal use of this material is permitted. Permission from IEEE must be obtained for all other uses, in any current or future media, including reprinting/republishing this material for advertising or promotional purposes,creating new collective works, for resale or redistribution to servers or lists, or reuse of any copyrighted component of this work in other works.}  
  \thanks{I would like to thank Gavin Hartnett, Andrew Lohn,
    Adrienne Propp, A. Christian Johnson, Brian Vegetabile and other
    participants at the AI study circle for their feedback and
  comments.  The paper greatly improved because of them.}
  \\ Microsoft Corporation and Pardee RAND Graduate School\\
  justin.grana@microsoft.com}

  \maketitle

\begin{abstract}
We show how perturbing inputs to machine learning services
(ML-service) deployed in the cloud can protect against model stealing
attacks.  In our formulation, there is an ML-service that receives
inputs from users and returns the output of the model.  There is an
attacker that is interested in learning the parameters of the
ML-service. We use the linear and logistic regression models to
illustrate how strategically adding noise to the inputs fundamentally
alters the attacker's estimation problem.  We show that even with
infinite samples, the attacker would not be able to recover the true
model parameters.  We focus on characterizing the trade-off between
the error in the attacker's estimate of the parameters with the error
in the ML-service's output.
\end{abstract}

\section{Introduction}
In response to the advent of machine learning and artificial
intelligence, several firms have developed cloud-based machine-learning
as-a-service (ML-service) platforms \cite{mls}.  These platforms allow
users to access a trained machine learning model by submitting queries
and receiving output from the model.  The main value of such services
is that the end-users --- individuals that are typically only
interested in the model's output --- do not need to preoccupy
themselves with the model's details or the model training procedure.

As a simple example, an employee for a mortgage company might be
interested in the probability that a potential customer defaults on
their loan in the next year.  The employee can upload characteristics
of the customer such as age, educational attainment and outstanding
debt to the ML-service.  The pre-trained ML-service model will then
output back to the employee the probability that the potential
customer defaults on their mortgage.  In this example, the employee
receives the relevant information without having to understand the
details of the underlying prediction model.

Unfortunately, these models are subject to several types of attacks
\cite{memb,inv1,sum1}.  One particularly prevalent type of attack is
one in which an attacker conducts a model-stealing attack
\cite{ms1,ms2,ms3,knock,hf,bd}.  In such an attack, a malicious actor
attempts to learn the underlying model of the ML-service.  In this
paper, we present a method for protecting against model stealing
attacks where the attacker is directly interested in learning the
parameters of the model.  Our method relies on strategically adding
noise to the \emph{inputs} of the machine learning model which
introduces endogeneity into the attacker's estimation problem, and
thus an attacker's estimate of the model's parameters are biased and
inconsistent \cite{end1,end2,end3}.  Endogeneity arises in a
regression model when the error term is correlated with one or more of
the regressors.  In models with endogeneity, even with infinite
samples the attacker cannot learn the true parameters of the trained
model.  The main trade-off our method considers is how much noise in
the output of the ML-service is needed to induce a bias of a certain
size into the parameter estimates.  This trade-off is similar to other
methods aimed at preventing model stealing \cite{ms4}, however ours is
the first to consider the noise in the output as a function of noise
in the input.

There are several reasons why an attacker would have an incentive to
steal the ML-service model parameters.  First, ML-services usually charge per
query.  Therefore, if a user can learn the parameters of an ML-service
model, they can build their own prediction API and no longer have to
pay for queries.  Alternatively, if the data a user submits to the
platform is sensitive, then the user might wish to build its own model
to bypass any security concerns associated with sending queries over a
network.

Yet another reason ML-service parameters are subject to stealing is
because attackers have an inherent value in knowing the parameters.
For example, consider an ML-service platform that reports the
probability that an individual purchases a product conditional on
whether the individual saw a targeted internet advertisement.  A firm
that is considering placing internet advertisements might be
interested in stealing the model to learn the parameters that
determine the average treatment effect (ATE) of the advertisement.
Learning this parameter would inform the firm how much they would be
willing to pay for targeted advertising.



Our method protects against both an attacker that steals the
ML-service's parameters with the intention of replicating the
ML-service as well as an attacker that has an inherent value in
learning the parameters.  
As a point of clarification, our method does not directly address the
case in which an attacker wants to steal the ML-service to replicate
its prediction power but does so by other means than parameter
stealing.  For example, we don't explicitly address how our method
protects against an attacker that builds a neural network to replicate
the ML-service model to leverage its predictive power only with no
concern for the actual ML-service parameters.  Nevertheless, our
method still protects against a broad class of attackers.

It is also crucial to emphasize that this initial proof of concept is
simplified in two main ways.  First, we only consider linear and
logistic regression ML-services.  The properties of endogeneity do
indeed extend to more sophisticated models, including neural networks
\cite{nne}, but using such techniques would unnecessarily complicate
the analysis and detract from the key implications of adding noise to
ML-service inputs.  Secondly, we only consider relatively simple
attacker models.  A natural extension would be to include a full
game-theoretic formulation complete with attacker and defender utility
functions. While that is a valuable extension, this work is meant to
illustrate the efficacy of a defender whose strategy set contains
options on how to add noise to ML-service inputs.  

\section{Related Work}
The literature on model stealing is relatively young.  Initial
research showed that if an ML-service is a regression model and the
attacker knew the functional form of the model, it would be able to
recover the model parameters if it submitted one query per model
parameter \cite{ms1}. Other work included methods for stealing neural
networks and decision tree models from ML-services \cite{ms1,ms5} as
well as model hyper parameters \cite{ms2}.  Methods to protect against
model stealing are also a topic of interest, where most focus on
selectively perturbing the output of the ML-service \cite{ms4}.
Output perturbations include rounding the output or truncating the
output to obfuscate the ML-service's true output. \footnote{A small
  body of research focuses on detecting adversarial patterns in a
  user's sequence of queries \cite{ms3}}

Figure \ref{fig:dif} contrasts our approach with the literature.  In
the figure, the leftmost vertical sequence represents an ML-service
without any protection against model stealing.  In other words, inputs
are submitted to the service and the service returns an output.  The
top path is the traditional method of preventing model stealing by
perturbing the output.  This is the case when random noise is added to
the output \cite{ms4}, or instead of returning class probabilities, the
ML-service only returns a label \cite{lm}.  What such a transformation
effectively does is transforms the ML-service's ``True Model'' ($M$)
to another model, Model$^\prime$ ($M^\prime$).  However, perturbations
to the outputs do not guarantee that $M$'s parameters cannot be known
simple by sampling $M^\prime$.  For example, if the only difference
between $M$ and $M^\prime$ was that $M^\prime$ adds a mean zero
normal random variable to the output of $M$, an attacker would be able
to learn the parameters of $M$, just by querying $M^\prime$
sufficiently many times.

\begin{figure*}
  \centering
    \includegraphics[trim={0cm 0 0cm 0},clip, width=4in]{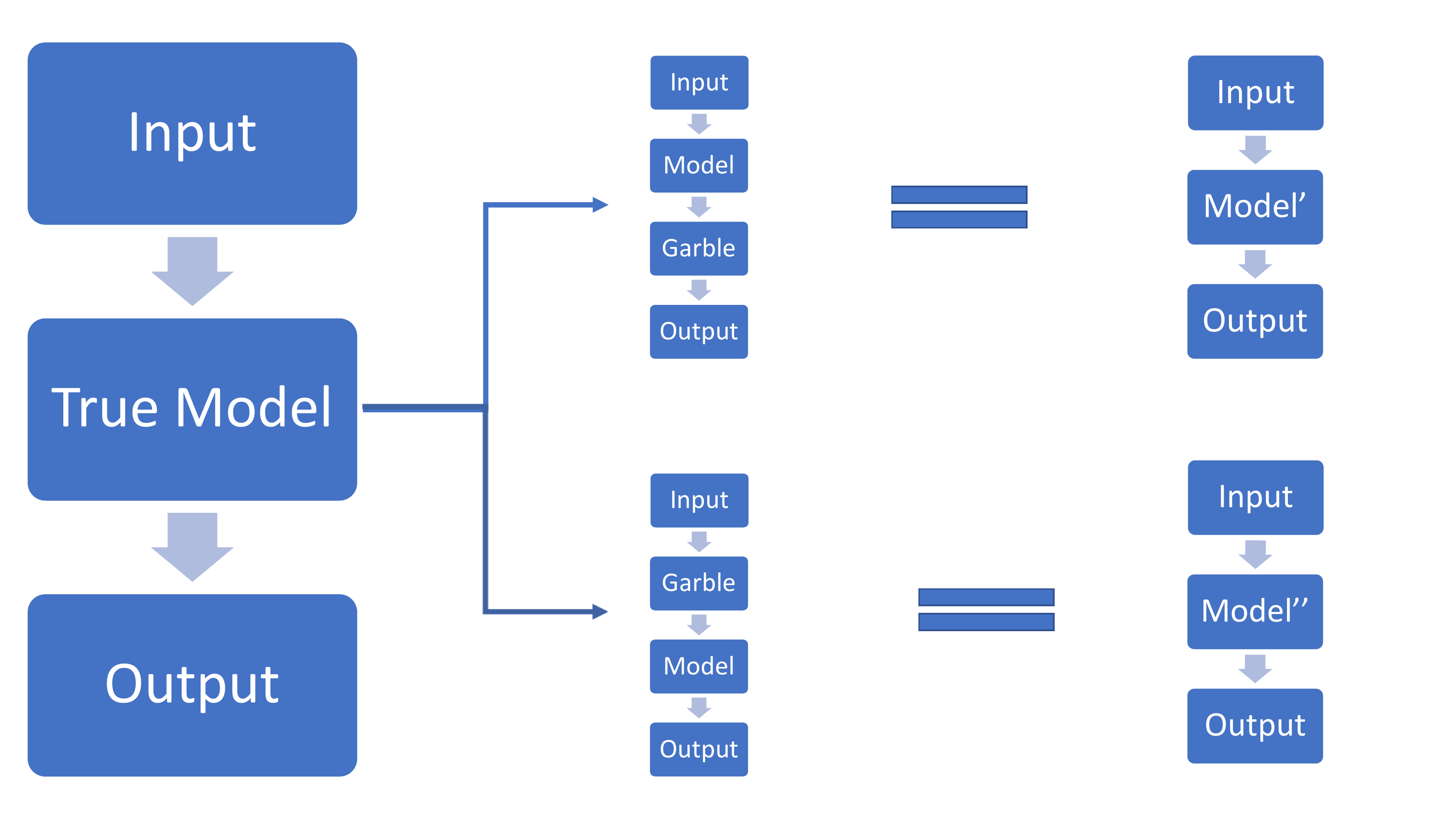}
    \caption{Two approaches to perturbing models}
    \label{fig:dif}
  \end{figure*}

Our approach is similar to the literature in that by perturbing the
inputs we are changing the ML-service's model from $M$ to
Model$^{\prime\prime}$ ($M^{\prime\prime}$), as shown in the bottom
path of figure \ref{fig:dif}.  However, our approach illuminates
several novelties that were obscured in the case of perturbed outputs.
First, we introduce a simple garbling function that makes it
impossible for the attacker to learn the parameters of $M$, even with
infinite samples of $M^{\prime\prime}$. This is an improvement over
the case when mean-zero noise is added to the ML-services output.
Secondly, we show that in non-linear models, it is possible to perturb
the inputs to obscure the parameters from the attacker while
maintaining a negligible difference in the output $M$ and
$M^{\prime\prime}$.  Finally, we show how to leverage the correlation
structure of the inputs to further reduce the difference between the
outputs of $M$ and $M^{\prime\prime}$ while still obscuring the true
parameters from the attacker.  Since previous models focus only on
perturbing outputs, this notion of leveraging the structure of inputs
to prevent model stealing has yet to be explored.  Of course, with
infinite samples the attacker would be able to approximate
$M^{\prime\prime}$.  However, we are interested in the case where the
defender wants to obscure the parameters of $M$, not
$M^{\prime\prime}$ and we examine how much noise the defender must
tolerate in order to obscure the parameters of $M$ by a given amount.


\section{General Framework}
A machine learning model is a function
$f_{\theta}:\mathcal{X} \rightarrow \mathcal{Y}$.  The function
$f_{\theta}$ is from $\mathbb{R}^K\rightarrow R^M$, and is governed by
parameters $\theta \in \Theta \in \mathbf{R}^P$.  If the ML-service did not need
to be protected from model stealing, a user would submit a vector
$\mathbf{x}$ and the ML-service would return the value
$f_{\theta}(\mathbf{x})$.  Previous work on preventing model stealing
focuses on changing the function $f_{\theta}$ so that when an attacker submits
a vector $\mathbf{x}$, the ML-service returns
$\tilde{f}_{\tilde{\theta}}(\mathbf{x})$.  Our approach is  different.
Instead of changing $f$ directly, we change $\mathbf{x}$ before it
goes into the machine learning service.  Specifically, there is a garbling
function $g:\mathcal{X}\rightarrow \mathcal{X}$ such that when a user
submits an input $\mathbf{x}$, the ML-service returns
$f_{\theta}(g(\mathbf{x}))$.

An attacker submits a dataset $\mathbf{X} = \{\mathbf{x}_1 ,
\mathbf{x}_2 \cdots \mathbf{x}_N\} \in \mathcal{X}^N$ to the machine
learning model and receives in return the predictions
$\mathbf{\hat{Y}} = \{f(g(\mathbf{x}_1)), f(g(\mathbf{x}_2)), \cdots
f(g(\mathbf{x}_N))\} \in \mathcal{Y}^N$.  Although not necessary, we
constrain $g$ such that for some dataset $\mathbf{X}$, the expected
deviation between $\mathbf{x}$ and $g(\mathbf{x})$ is $0$.
Specifically
\begin{align}
\frac{1}{|\mathbf{X}|}\sum_{\mathbf{x}\in \mathbf{X}}\mathbf{E}_{g}[\mathbf{x}- g(\mathbf{x})] = 0
\end{align}
Again, this is not a necessary constraint but we only include it to show
how our method is valid under mean-zero expected error, which might be
a desirable property.  Of course, loosening this constraint does not
degrade the performance of our method. In addition, our method does
not rely on the attacker submitting an entire dataset at once.
Specifically, our method is also valid when an attacker submits
sequential queries.  However, in that case the average error between
the true inputs and the garbled inputs would not, in general, be $0$.
Again, this does not degrade the performance of our method.

For all $N$, an attacker has an estimating function $h:
\mathcal{X}^N\times \mathcal{Y}^N \rightarrow \Theta$ that maps the
attacker's inputs and outputs to its estimate of the parameters,
$\hat{\theta}$.  Common estimating functions are the least squares or
maximum likelihood estimates.  To evaluate the effectiveness of our
method we consider two quantities:

\begin{enumerate}
  \item The difference between the attacker's estimate of the
    parameters and the true parameters with infinite samples, which is given by
    \begin{equation}
      \mathcal{D} =  \text{plim} h(\mathbf{X}, \mathbf{\hat{Y}}) - \theta 
    \end{equation}
          We call this the attacker's ``estimation error''.
  \item The expected squared difference between the ML-service's output without
    perturbing the input and the ML-service's output with the
    perturbed output, which is given by
    \begin{equation}
      \sigma^2 = \mathbf{E}\left[\left(f_{\theta}(\mathbf{x}) - f_{\theta}(g(\mathbf{x}))\right)^2\right]
    \end{equation}
    We call this the ML-service's ``prediction error.''  However, it
    is \emph{not} the difference between what the ML-service predicts
    and some true value but instead is the difference between what the
    ML-service predicts when it garbles the inputs and what the
    ML-service would predict if it didn't garble to inputs.  
\end{enumerate}

The equation for $\mathcal{D}$ is the difference between the true
parameters and the asymptotic limit of the attacker's estimate.  The
second term is the mean squared error between the ML-services output
and what the ML-service would output if it wasn't concerned with
protecting against model stealing.  Broadly speaking, the goal is to
make $\mathcal{D}$ as high as possible while minimizing $\sigma^2$.
The intuition is that $\mathcal{D}$ is the error in the attacker's
estimate of the true parameters, which is what is used to prevent
model stealing.  However, the ML-service still needs to be useful to
normal users so it can't distort the output so much that the
ML-service becomes useless.  The ML-service is most useful when
$\sigma^2$ is low.

One subtlety is that if the attacker \emph{knew} the value of
$\mathcal{D}$, it would be able to adjust its estimates to correct for
the error.  Therefore, higher values of $\mathcal{D}$ would not
necessarily prevent model stealing.  In a full game theoretic
formulation, the defender would likely randomize over values of
$\mathcal{D}$ so that the attacker wouldn't be able to know exactly
how to correct for the bias in its estimates.  Since in this proof of
concept, we do not consider the full game theoretic effects, we
motivate high values of $\mathcal{D}$ through its deterrent effect.
Specifically, if the attacker knew that the defender could obfuscate
the parameters with a high degree of error, then a high value of
$\mathcal{D}$ would disincentivize the attacker from stealing the
model.  In other words, the attacker might not know exactly the value
of $\mathcal{D}$ but knows that the defender could obfuscate the
estimates \emph{up to} a certain amount and thus, the attacker would
not be willing to steal the model because it knows that it's final
estimates might be prohibitively inaccurate.

\section{Results}
In this section we present a mix of analytical and numerical results.
We begin with the simplest case of a one covariate linear regression
model.  The linearity of the model permits an analytical analysis.  We
then extend the model to the case of logistic regression and examine
our method's performance We conclude with an analysis of correlated
regressors.

\subsection{Simple Linear Regression}
\label{sec:lin}
In the simplest case, consider the one variable linear machine
learning model $f_{\theta} = \alpha + \beta x$ where $\theta =
\{\alpha, \beta\}$.  Given a dataset $\mathbf{X}$ of queries, to
induce endogeneity we define the garbling function of a specific input
$x$ as
\begin{equation}
  g(x) = x + \gamma\left(\mu(x, \lambda^2)\right)
  \label{eq:garb}
\end{equation}
  where $\gamma$ and $\lambda$ are constants and $\mu(a,b)$ is a
  normal random variable with mean $a$ and variance $b$.
  Given the dataset $\mathbf{X}, \mathbf{\hat{Y}}$, the
  attacker's estimation function, $h$, is the ordinary least squares
  solution.  That is, the function $h$ for the attacker is ``choose the
  parameters $\hat{\alpha}$ and $\hat{\beta}$ that minimizes
  $\displaystyle\sum_{i=1}^N (\hat{y}_i - (\hat{\alpha}
  +\hat{\beta}x_i))^2$, where $\hat{y}$ is the value returned from the
  ML-service.  Since $\beta$ gives the marginal impact of a change in
  $x$, we focus the rest of the analysis on the slope parameters and
  ignore the intercept.

Before deriving the terms for $\mathcal{D}$ and $\sigma$, it is
important to emphasize that this ``simple'' formulation is not as
restrictive as it seems.  First, if instead of using least squares,
the attacker used a maximum likelihood estimation procedure, the
results would not change.  Secondly, the restriction of only one
regressor is also not as restrictive as it seems.  For example, if
there were multiple uncorrelated regressors and only one of the
regressors was perturbed, the results would not change.  Therefore,
this simple case actually covers multiple linear regression when one
regressor is garbled.  Later, we consider the case of multiple
perturbations of correlated regressors.  

We now derive the relationship between $\mathcal{D}$ and $\sigma^2$ in
the linear model.
\begin{prop}[Large Sample Results]
  \label{prop:linear}
  The relationship between the error in the attacker's estimate and
  the root mean square error is given by:
  \begin{enumerate}
    \item $\mathcal{D} = $\emph{plim}$(\hat{\beta}) - \beta  = \beta\gamma$
    \item $\sigma^2 = \left[(\beta\gamma)^2(Var(x)+\lambda^2)\right]$
  \end{enumerate}
\end{prop}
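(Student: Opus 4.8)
The plan is to reduce both quantities to elementary moment computations by first making the garbling function's structure explicit. Writing $\mu(x,\lambda^2) = x + \eta$ with $\eta \sim N(0,\lambda^2)$ drawn independently of $x$, the garbled input becomes $g(x) = x + \gamma(x+\eta) = (1+\gamma)x + \gamma\eta$. Substituting into the ML-service's response gives $\hat{y} = \alpha + \beta g(x) = \alpha + \beta(1+\gamma)x + \beta\gamma\eta$. The crucial observation --- and the source of the ``endogeneity'' the paper advertises --- is that the mean of the injected noise is $x$ itself, so $g$ contributes a term $\beta\gamma x$ that is perfectly correlated with the regressor $x$ the attacker actually sees.

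For part (1), I would argue that the attacker's OLS slope has probability limit $\text{plim}(\hat\beta) = \mathrm{Cov}(x,\hat{y})/\mathrm{Var}(x)$, which is just the statement that sample moments converge to population moments (a law of large numbers / standard OLS-consistency invocation --- this is the one genuinely probabilistic step and the place to be careful about stating the regularity conditions). Using the expression above and the independence of $\eta$ from $x$, $\mathrm{Cov}(x,\hat{y}) = \beta(1+\gamma)\mathrm{Var}(x) + \beta\gamma\,\mathrm{Cov}(x,\eta) = \beta(1+\gamma)\mathrm{Var}(x)$, so $\text{plim}(\hat\beta) = \beta(1+\gamma)$ and therefore $\mathcal{D} = \text{plim}(\hat\beta) - \beta = \beta\gamma$. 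Equivalently, one can phrase this as omitted-variable bias: relative to the attacker's assumed specification $\hat{y} = \alpha + \beta x + u$, the implied error is $u = \beta\gamma x + \beta\gamma\eta$ with $\mathrm{Cov}(x,u) = \beta\gamma\,\mathrm{Var}(x)$, and dividing by $\mathrm{Var}(x)$ recovers the same bias $\beta\gamma$.

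For part (2), I would compute the per-query discrepancy directly: $f_\theta(x) - f_\theta(g(x)) = \beta\big(x - g(x)\big) = -\beta\gamma\,\mu(x,\lambda^2) = -\beta\gamma(x+\eta)$. Squaring and taking the expectation over both the input distribution and the garbling noise,
\[
\sigma^2 = (\beta\gamma)^2\,\mathbf{E}\big[(x+\eta)^2\big] = (\beta\gamma)^2\big(\mathbf{E}[x^2] + 2\,\mathbf{E}[x]\mathbf{E}[\eta] + \lambda^2\big) = (\beta\gamma)^2\big(\mathbf{E}[x^2] + \lambda^2\big),
\]
using $\mathbf{E}[\eta]=0$ and the independence of $\eta$ and $x$. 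The final step is to replace $\mathbf{E}[x^2]$ with $\mathrm{Var}(x)$, which holds under the centered-input (mean-zero) assumption consistent with the framework's mean-zero garbling constraint; this yields $\sigma^2 = (\beta\gamma)^2(\mathrm{Var}(x)+\lambda^2)$.

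I do not anticipate a deep obstacle here, since after the change of variables both claims are one-line moment calculations. The two points requiring care are (i) correctly parsing the garbling function so that the $x$-dependent mean of the noise --- rather than merely additive mean-zero noise --- is what drives the asymptotic bias, and (ii) justifying the passage to the probability limit, i.e.\ that the empirical covariance and variance converge to their population counterparts so that OLS consistency applies. The only modeling assumption I would flag explicitly is the mean-zero input condition needed to turn $\mathbf{E}[x^2]$ into $\mathrm{Var}(x)$ in part (2).
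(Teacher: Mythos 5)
Your proof is correct and follows essentially the same route as the paper's: both rewrite the garbled response as $\hat{y} = \alpha + (1+\gamma)\beta x + \beta\gamma\eta$ with exogenous noise, invoke OLS consistency to get $\text{plim}(\hat\beta) = (1+\gamma)\beta$ for part (1), and reduce part (2) to an elementary moment computation using the independence of $\eta$ and $x$. If anything you are slightly more careful than the paper, which silently replaces $\mathbf{E}[x^2]$ with $Var(x)$ (i.e., assumes mean-zero inputs, stated explicitly only in the later sections) and expands the square term by term where your factorization $f_\theta(x) - f_\theta(g(x)) = -\beta\gamma(x+\eta)$ is more direct.
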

\begin{proof}
  First we prove 1.  Under the garbling function $g$, the data generating process for the
  ML-service is
  \begin{align}
  \hat{y} &= \alpha + \beta \left(x + \gamma\left(\mu(x, \lambda)\right)\right) \nonumber \\
  &= \alpha + \beta(x + \gamma(x+\lambda\epsilon)) \nonumber \\
  &= \alpha + (1+\gamma)\beta x + \beta\lambda\gamma\epsilon \label{eq:error} \\
  &= \alpha + \tilde{\beta}x + \eta \nonumber
\end{align}
where $\tilde{\beta} = (1+\gamma)\beta$, $\epsilon$ is a standard normal
random variable and $\eta$ is a composite mean-zero noise term.
This shows that the data generating process for the ML-service is a
simple linear model in $x$.  Therefore, when the attacker estimates
$\hat{\beta}$ from the data it estimates a linear regression model
with slope parameter equal to $\tilde{\beta}$.  Because OLS estimates
converge to the true parameters of the data generating process, the
attacker's estimate converges to $\tilde{\beta}$ and thus
$\tilde{\beta} - \beta = \beta\gamma$.

To prove part 2 of proposition \ref{prop:linear}, note that
\begin{align}
 \sigma^2 &= \mathbf{E}\left[\left(f_{\theta}(\mathbf{x})-f_{\theta}(g(\mathbf{x}))\right)^2\right]= \mathbf{E}[(y-\hat{y})^2] \nonumber \\
  &= \mathbf{E}\left[\bigg(\alpha + \beta x - (\alpha + \beta(1+\gamma)x + \beta\gamma\lambda\epsilon)\bigg)^2\right] \nonumber \\
  & = \mathbf{E}\left[\bigg((\beta\gamma)^2x^2 + 2\beta^2\gamma^2\lambda x\mu +(\gamma\beta\lambda\epsilon)^2\bigg)\right] \nonumber \\
  & = \left[(\beta\gamma)^2\mathbf{E}[x^2] + 2\beta^2\gamma^2\lambda\mathbf{E}[x\mu]+(\gamma\beta\lambda)^2\mathbf{E}[\epsilon^2]\right] \nonumber \\
  & = \left[(\beta\gamma)^2(Var(x) +\lambda^2)\right]
\end{align}
where the last line follows because $x$ and $\epsilon$ are independent and
$\epsilon$ is a standard normal variable.
\end{proof}

Proposition \ref{prop:linear} gives the precise trade-off between the
attacker's error in the parameter estimates and the noise in the
output required to induce the estimates.  The proposition reveals
several high level features.  First, the error in the attacker's
estimate is linear in $\beta$ and $\gamma$.  That means the attacker's
\emph{relative} error is linear in $\gamma$ only.  Second, the
prediction error depends on the variance of $x$.  This is an artifact
of having systematic errors.  Specifically, larger absolute values of
$x$ lead to higher errors because larger values of $x$ are perturbed
more. Although the linear relationship between the prediction error
and the variance of $x$ is a product of the functional form of the
garbling function, the main insight is that the prediction error will
depend on the distribution of $x$ because endogeneity requires
systematic errors.

Finally, the proposition shows that the prediction error is increasing
in the free parameter, $\lambda$.  Since the goal is to minimize
$\sigma^2$, an obvious choice is to set $\lambda$ to $0$.  However,
there are several exogenous reasons why the defender would not do
this.  First, $\lambda$ enters the equation for the attacker's
estimation error in finite samples as given in the following
proposition:

\begin{prop}[Small Sample Results]
  Suppose the attacker submits inputs of size $n$ and estimates
  $\beta$ using least squares or maximum likelihood.  Then:
  \begin{equation}
\mathbf{E}[(\beta-\hat{\beta})^2] = (\beta\gamma)^2 +\left(\beta\gamma\lambda\right)^2
\mathbf{E}\left[\left(\displaystyle\sum_{i=1}^n
  x_{i}^{2}\right)^{-1}\right]
  \end{equation}
  \label{prop:small}
\end{prop}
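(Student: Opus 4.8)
The plan is to use the standard bias--variance decomposition of the mean squared error, applied to the finite-sample OLS slope $\hat\beta$ computed from the data-generating process already derived in~\eqref{eq:error}. Writing $\hat y_i = \alpha + \tilde\beta x_i + \eta_i$ with $\tilde\beta = (1+\gamma)\beta$ and $\eta_i = \beta\gamma\lambda\epsilon_i$, where the $\epsilon_i$ are i.i.d.\ standard normals independent of the regressors, the decomposition reads
$$\mathbf{E}[(\beta - \hat\beta)^2] = \big(\mathbf{E}[\hat\beta] - \beta\big)^2 + \mathrm{Var}(\hat\beta).$$
I would then establish the two pieces separately: the squared bias equals $(\beta\gamma)^2$, and the variance equals $(\beta\gamma\lambda)^2\,\mathbf{E}\big[(\sum_i x_i^2)^{-1}\big]$, and add them.

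For the bias, I would condition on the sample $X$. Because each $\eta_i$ is mean zero and independent of $X$, the OLS estimator satisfies $\mathbf{E}[\hat\beta \mid X] = \tilde\beta$ for \emph{every} realization of $X$ (the noise contributes nothing in expectation and the constant term is absorbed by including an intercept). Hence $\mathbf{E}[\hat\beta] = \tilde\beta = (1+\gamma)\beta$ exactly -- not merely asymptotically -- so the bias is $\tilde\beta - \beta = \beta\gamma$, recovering Proposition~\ref{prop:linear} as a finite-sample identity and giving squared bias $(\beta\gamma)^2$. The noteworthy point is that the bias carries no finite-sample correction; only the variance is new.

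For the variance, I would again condition on $X$ and apply the law of total variance. Since $\mathbf{E}[\hat\beta \mid X] = \tilde\beta$ is constant, the between-sample term vanishes and $\mathrm{Var}(\hat\beta) = \mathbf{E}[\mathrm{Var}(\hat\beta \mid X)]$. Conditional on $X$, the slope estimator is linear in the homoskedastic noise $\eta_i$ of variance $\sigma_\eta^2 = (\beta\gamma\lambda)^2$, so the usual OLS formula gives $\mathrm{Var}(\hat\beta\mid X) = \sigma_\eta^2\,(\sum_i x_i^2)^{-1}$. Pulling the constant $\sigma_\eta^2$ out of the expectation over $X$ yields the second term, and combining with the squared bias completes the result. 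The maximum-likelihood claim then follows for free, since in a Gaussian linear model the MLE slope coincides exactly with the OLS slope, leaving the MSE unchanged.

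The main obstacle is the bookkeeping around the intercept and the exact denominator. The statement carries $\sum_i x_i^2$ rather than the centered $\sum_i (x_i - \bar x)^2$ that appears in the textbook slope variance; matching the two requires either regression through the origin (valid once $\alpha$ is dropped, consistent with the paper's decision to ignore the intercept) or working with demeaned regressors so that $\bar x = 0$. I would state this convention explicitly, since otherwise a residual $\alpha\sum_i x_i / \sum_i x_i^2$ term would contaminate the conditional mean and spoil the clean bias. A secondary subtlety is that the variance cannot be simplified past $\mathbf{E}\big[(\sum_i x_i^2)^{-1}\big]$: this reciprocal moment of the random design has no closed form in general, which is precisely why $\lambda$ re-enters the error in finite samples even though it dropped out of the asymptotic bias, so the cleanest statement is to leave it as written.
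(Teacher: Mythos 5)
Your proposal is correct, and it is considerably more self-contained than the paper's own proof. The paper's entire argument consists of the reduction you make in your first step --- rewriting the garbled data-generating process as the standard linear model $\hat{y} = \alpha + (1+\gamma)\beta x + \beta\gamma\lambda\epsilon$ --- after which it simply cites a textbook for the finite-$n$ distribution of the least-squares estimator and stops. You instead carry out the finite-sample computation explicitly: the bias--variance decomposition $\mathbf{E}[(\beta-\hat{\beta})^2] = (\mathbf{E}[\hat{\beta}]-\beta)^2 + \mathrm{Var}(\hat{\beta})$, conditional unbiasedness $\mathbf{E}[\hat{\beta}\mid X] = (1+\gamma)\beta$ giving squared bias $(\beta\gamma)^2$ exactly (a finite-sample strengthening of Proposition \ref{prop:linear}, which the paper proves only asymptotically), and the law of total variance collapsing $\mathrm{Var}(\hat{\beta})$ to $(\beta\gamma\lambda)^2\,\mathbf{E}\bigl[\bigl(\sum_i x_i^2\bigr)^{-1}\bigr]$. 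What your route buys is precision that the proof-by-citation glosses over: you correctly note that the textbook slope variance carries the centered denominator $\sum_i (x_i-\bar{x})^2$, so the proposition as stated holds only under regression through the origin (consistent with the paper's decision to ``ignore the intercept'') or demeaned regressors --- otherwise a residual $\alpha\sum_i x_i/\sum_i x_i^2$ term contaminates the conditional mean, exactly as you say. The paper's one-line proof never confronts this mismatch, so your explicit handling of the convention is a genuine improvement; likewise, your remark that the Gaussian MLE slope coincides with the OLS slope is what actually discharges the ``or maximum likelihood'' clause in the statement, which the paper leaves entirely implicit.
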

\begin{proof}
  The true data generating process with garbling is given by:
  \begin{equation}
    \hat{y}= \alpha + (1+\gamma)\beta x + \beta\lambda\gamma\mu
  \end{equation}
which is a standard linear regression model whose finite $n$
distribution is given in  \cite{end1} chapter $4$
\end{proof}
Consequently, if the defender is interested in not just the infinite
sample limit but also the small-sample parameter estimates, changing
$\lambda$ gives the defender another lever to obfuscate the model.
Secondly, the defender might also be interested in protecting the
ML-service from being replicated by an attacker (i.e. protecting an
attacker from stealing $M''$).  If $\lambda=0$, then the ML-service is
deterministic and it is relatively simple for an attacker to replicate
the ML-service.  Therefore, setting $\lambda>0$ protects against
stealing $M''$.  Finally, from a strategic perspective, a
sophisticated attacker would form beliefs about the garbling function
and the associated parameters ($\gamma$ and $\lambda$) in order to
best reconstruct an estimate for $\beta$.  If the defender always sets
$\lambda=0$, then the attacker only needs to reason about $\gamma$ to
make inference about the value of $\beta$, which reduces the
dimensionality of the attacker's decision problem.

Finally, if the attacker knew the nonzero value of $\lambda$, it
would be able to recover $\beta$, as given by the following
proposition.  
\begin{prop}
  \label{prop:needlam}
  Suppose the attacker knows the value of $\lambda$ and the attacker
  estimates the regression coefficients $\hat{\beta}$ and
  $\hat{\alpha}$ and the model variance $\hat{\Sigma^2} =
  \sum_{i=1}^N(\hat{y} - (\hat{\alpha} + \hat{\beta}x_i))^2$ using
  least squares or maximum likelihood. Then in
  the infinite limit, the attacker can identify $\beta$ as
  $\hat{\beta} - \frac{\hat{\Sigma}}{\lambda}$.
\end{prop}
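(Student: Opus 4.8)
The plan is to exploit the fact that, once the attacker knows $\lambda$, the regression residual variance carries exactly the missing piece of information needed to separate $\beta$ from the garbling-induced inflation factor $(1+\gamma)$. The attacker already recovers $(1+\gamma)\beta$ from the slope; the residual variance supplies a second, independent equation in the same two unknowns.

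First I would return to the data generating process in equation~\eqref{eq:error}, namely $\hat{y} = \alpha + (1+\gamma)\beta x + \beta\lambda\gamma\epsilon$, and read it as a textbook homoskedastic linear regression with slope $\tilde{\beta} = (1+\gamma)\beta$, intercept $\alpha$, and Gaussian disturbance $\eta = \beta\lambda\gamma\epsilon$ whose variance is $(\beta\gamma\lambda)^2$. By the consistency argument already invoked in Proposition~\ref{prop:linear}, the least-squares (equivalently maximum-likelihood) estimates satisfy $\hat{\beta} \xrightarrow{p} (1+\gamma)\beta$ and $\hat{\alpha} \xrightarrow{p} \alpha$.

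The key new step is to compute the probability limit of the residual-variance estimator $\hat{\Sigma}^2$. Because $\hat{\alpha}$ and $\hat{\beta}$ converge to the true intercept and slope of the data generating process, the fitted residuals $\hat{y}_i - (\hat{\alpha} + \hat{\beta}x_i)$ converge to the true disturbances $\eta_i$, so the (normalized) sum of squared residuals converges to $\mathrm{Var}(\eta) = (\beta\gamma\lambda)^2$. Taking the square root gives $\hat{\Sigma} \xrightarrow{p} |\beta\gamma|\lambda$, and dividing by the known constant $\lambda$ isolates $\hat{\Sigma}/\lambda \xrightarrow{p} |\beta\gamma|$. Substituting into the stated estimator then yields $\hat{\beta} - \hat{\Sigma}/\lambda \xrightarrow{p} (1+\gamma)\beta - |\beta\gamma| = \beta$, which is the claim. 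Equivalently, the two observable limits $\hat{\beta} = \beta + \beta\gamma$ and $\hat{\Sigma}/\lambda = |\beta\gamma|$ form two equations that pin down the two unknowns $\beta$ and $\beta\gamma$.

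The main obstacle is the sign ambiguity. Since $\hat{\Sigma}$ is a standard deviation it is necessarily nonnegative, so the residual variance identifies only the magnitude $|\beta\gamma|$ and not its sign; the subtraction in $\hat{\beta} - \hat{\Sigma}/\lambda$ is therefore the correct inversion only when $\beta\gamma \ge 0$ (for instance, when $\beta$ and $\gamma$ share a sign). I would handle this either by stating the sign condition explicitly as a hypothesis, or by noting that the attacker resolves it from side information---knowledge of the sign of $\gamma$ selected by the defender, or a plausible range for $\beta$---so that the correct branch $\hat{\beta} \mp \hat{\Sigma}/\lambda$ is chosen. A secondary point worth flagging is the notation in the definition of $\hat{\Sigma}^2$: as written it is an unnormalized sum that diverges with $N$, so the argument must read it as the usual mean squared-residual (error-variance) estimator, which is the quantity that converges to $\mathrm{Var}(\eta)$.
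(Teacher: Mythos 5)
Your proposal matches the paper's proof: both read the garbled data generating process from equation \ref{eq:error} as a standard linear regression, use consistency of the least-squares/ML slope and residual-variance estimates to obtain the two limiting equations $\hat{\beta} = (1+\gamma)\beta$ and $\hat{\Sigma}^2 = (\beta\gamma\lambda)^2$, and solve this system for $\beta$. Your two caveats --- that $\hat{\Sigma}$ only identifies $|\beta\gamma|\lambda$, so the formula $\hat{\beta} - \hat{\Sigma}/\lambda$ presumes $\beta\gamma \ge 0$, and that the stated $\hat{\Sigma}^2$ needs a $1/N$ normalization to converge --- are genuine issues that the paper's own proof silently glosses over, so flagging them strengthens rather than departs from the argument.
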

\begin{proof}
As stated in proposition \ref{prop:linear}, $\hat{\beta}$ converges to
$(1 + \gamma)\beta$.  Since the attacker is estimating a linear model,
the attackers estimate of the error $\hat{\Sigma}^2$ converges to the
error in the true model which is given by $(\beta\gamma\lambda)^2$ in
equation \ref{eq:error}. This gives the following two equations where
the unknowns are $\beta$ and $\gamma$:
\begin{align}
  \hat{\Sigma}^2 = (\beta\gamma\lambda)^2 \nonumber \\
  \hat{\beta} = \beta(1+\gamma)
\end{align}
Solving for $\beta$ yields
\begin{equation}
  \beta = \hat{\beta} - \frac{\hat{\Sigma}}{\lambda}
\end{equation}
and thus $\beta$ is identified.  
\end{proof}

Proposition \ref{prop:needlam} illustrates how for our method to be
effective, a sophisticated attacker cannot know $\lambda$.  If it did,
it would be able to recover the true value of $\beta$.  Assuming
$\lambda>0$, this means in a full game theoretic treatment the
defender would likely randomize the values of $\gamma$ and $\lambda$
so that the attacker couldn't recover $\beta$ from the data. \footnote{Another
possibility would be to also add a mean-zero noise term to the output.
So for example, suppose the defender garbles with $g$ but also adds
$\mu(0, \lambda_2^2)$ to the output.  Then the equations for
$\hat{\Sigma}$ and $\hat{\beta}$ would contain the variables $\beta$,
$\gamma$, $\lambda$ and $\lambda_2$.  Therefore, as long as the
attacker doesn't know any \emph{2} of the variables, it would not be
able to recover $\beta$.}

\subsection{Logistic Regression}
Injecting endogeneity is not unique to the linear model and also
applies to nonlinear models.  The logistic regression model is a
special type of regression model that takes inputs and outputs
$\hat{y}\in[0,1]$ and is a common tool to predict the input's
``class.''  For example, logistic regression can be used to model the
probability of recidivism, the probability an individual files
bankruptcy, the probability a computer network is under attack or in
the multinomial case, the probability that an image contains each of
the digits 0-9. 

Under the same garbling function, $g$, as before, the logistic ML-service model
with perturbed inputs (assuming the $0$ mean for $x$) is given by:
\begin{align}
  \hat{y} =& \frac{e^{\alpha +\beta g(x)}}{1+e^{\alpha +\beta g(x)}} \nonumber \\
          =& \frac{e^{\alpha +\beta(x + \gamma(x + \lambda \mu))}}{1+e^{\alpha +\beta(x + \gamma(x + \lambda \mu))}}
\end{align}
Since the binomial logistic regression model can be
transformed into a linear model that predicts the ``log odds'', the
analysis is the same as in section \ref{sec:lin}.  However, when
translating the log-odds back to probabilities, the results are
slightly different as given in figure \ref{fig:log1}.
\begin{figure}
  \centering
  \includegraphics[width=\columnwidth]{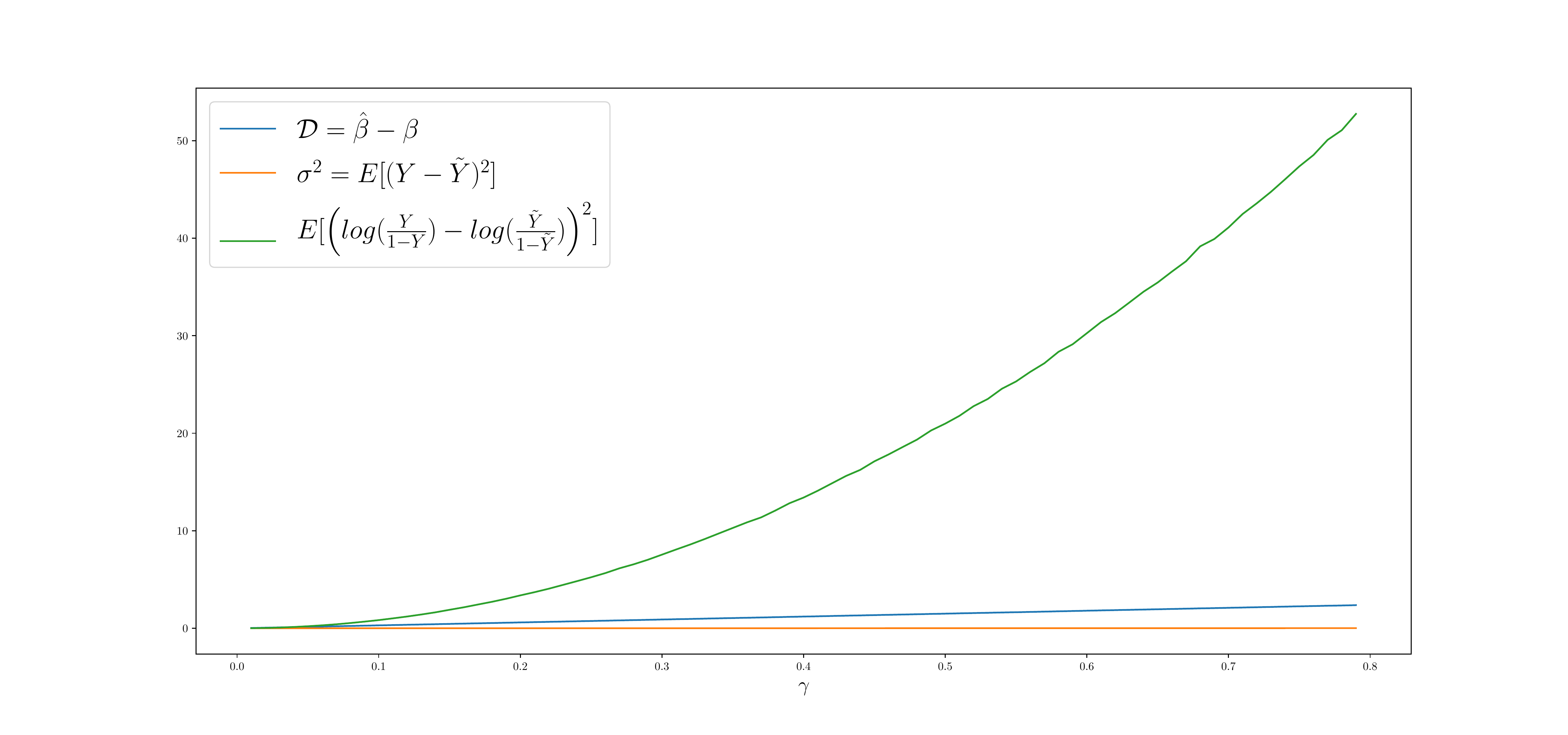}
  \caption{The figure shows that while the prediction error for the
    log odds is exponential in $\gamma$, the prediction error in the
    actual predicted probabilities is minimal.  This is due to the
    functional specification in the systematic errors in $X$.  For
    this plot, $\alpha=2, \beta=3, N=100000, \lambda = 1, Var(x)=8.3$}
  \label{fig:log1}
\end{figure}
The figure shows that it is possible to obscure the true value of the
parameters with a minimal effect on prediction error.  This is due to
the systematic perturbation of the inputs.  Specifically, the
perturbation of the inputs are most extreme for inputs that are high
in absolute value and mild for inputs low in absolute value.
Crucially, those larger perturbations occur along a relatively flat
portion of the logistic curve.  Therefore, the marginal effect of
perturbing the inputs at high absolute value inputs is minimal.  Of
course, the prediction error in the log-odds is still quadratic in
$\gamma$, as proposition \ref{prop:linear} established and figure
\ref{fig:log1} shows, but after transforming the log odds back to the
probabilities of interest, the prediction error is negligible.

Undoubtedly, this result is due to the functional form of $g$.  For
example, if $x$ values along the steepest part of the logistic curve
were perturbed the most, then different results would emerge.  While
determining what the optimal functional form of $g$ should be is
indeed an important and broad question, these results show that in
nonlinear models, certain choices of $g$ can allow the defender to
induce error in the parameter estimates with minimal impact on the
prediction error.  Specifically, perturbing inputs where the
derivative is close to zero provides minimal impact on the prediction
error.




\subsection{Multiple Perturbations}
Finally, we analyze the linear model with multiple regressors where each
regressor is garbled.  In this exposition, we consider the bivariate
case, though the results are easily generalized to $K$ regressors.
Specifically we consider the ML-service where $\hat{y}=f_{\theta} =
\alpha + \beta_1x_1 + \beta_2x_2$.  We now also perturb each $x_i$
independently so that
\begin{equation}
  g(x_i) = x_i + \gamma_i\left(\mu_i(x_i, \lambda)\right)
  \label{eq:garb2}
\end{equation}
where we assume the mean of $x_i$ is $0$. We could also of course let
$\lambda$ depend on $i$ but for ease of notation, we
assume $\lambda$ is the same for each regressor.  Like in the previous
section, we assume the attacker estimates the parameters using either
least squares of maximum likelihood.  We now establish the
multivariate version of proposition \ref{prop:linear}.

\begin{prop}
  \label{prop:multi}
  The relationship between the error in the attacker's estimate and
  the root mean square error is given by:
  \begin{enumerate}
    \item  \emph{plim}$(\hat{\beta_i}) - \beta_i  = \beta_i\gamma_i$
    \item $\begin{aligned}[t]
      &\sigma^2 = \bigg[(\beta_1\gamma_1)^2Var(x_1)+ (\beta_2\gamma_2)^2Var(x_2) \nonumber \\
        & +  2\beta_1\beta_2\gamma_1\gamma_2\emph{COV}(x_1,x_2)+ \\
        &(\lambda(\gamma_1\beta_1+\gamma_2\beta_2))^2\bigg] \nonumber
      \end{aligned}$
  \end{enumerate}
  where \emph{COV}$(x,y)$ is the covariance between random variables $x$ and $y$.  
\end{prop}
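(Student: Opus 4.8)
The plan is to follow the proof of Proposition~\ref{prop:linear}: first rewrite the garbled data generating process as an ordinary linear model in the \emph{observed} regressors $x_1,x_2$, and then evaluate the mean squared output deviation directly.

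For part~1, I would substitute the garbling function~\eqref{eq:garb2} into $\hat y = \alpha + \beta_1 x_1 + \beta_2 x_2$, writing each $\mu_i(x_i,\lambda) = x_i + \lambda\epsilon_i$ with $\epsilon_i$ standard normal and independent of $x_1,x_2$. Collecting terms gives
\begin{align*}
  \hat y &= \alpha + \beta_1(1+\gamma_1)x_1 + \beta_2(1+\gamma_2)x_2 + \lambda\big(\beta_1\gamma_1\epsilon_1 + \beta_2\gamma_2\epsilon_2\big) \\
         &= \alpha + \tilde\beta_1 x_1 + \tilde\beta_2 x_2 + \eta,
\end{align*}
with $\tilde\beta_i = (1+\gamma_i)\beta_i$ and composite noise $\eta$. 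The key observation, exactly as in the univariate case, is that $\eta$ is a linear combination of the garbling shocks $\epsilon_1,\epsilon_2$ and is therefore uncorrelated with $x_1,x_2$. Hence the attacker fits a correctly specified linear model with population slopes $\tilde\beta_i$, so OLS (equivalently maximum likelihood under normal errors) is consistent for them; assuming $x_1,x_2$ are not perfectly collinear so the limiting design matrix is invertible, $\text{plim}(\hat\beta_i)=\tilde\beta_i$, and subtracting $\beta_i$ gives $\beta_i\gamma_i$.

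For part~2, I would compute $\sigma^2 = \mathbf{E}[(y-\hat y)^2]$ from $y=\alpha+\beta_1 x_1+\beta_2 x_2$ and the rewritten $\hat y$, so that
\begin{equation*}
  \hat y - y = \beta_1\gamma_1 x_1 + \beta_2\gamma_2 x_2 + \lambda\big(\beta_1\gamma_1\epsilon_1 + \beta_2\gamma_2\epsilon_2\big).
\end{equation*}
Expanding the square and taking expectations term by term, all products pairing an $x_i$ with an $\epsilon_j$ vanish, since $\epsilon_j$ is mean zero and independent of the regressors; using the mean-zero assumption on the inputs so that $\mathbf{E}[x_i^2]=\text{Var}(x_i)$ and $\mathbf{E}[x_1 x_2]=\text{COV}(x_1,x_2)$ then reproduces the first three terms of the claim. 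What remains is the variance of the pure-noise piece, $\lambda^2\,\mathbf{E}\big[(\beta_1\gamma_1\epsilon_1+\beta_2\gamma_2\epsilon_2)^2\big]$.

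The step needing the most care, and where I expect the real work to lie, is this last expectation, because it is precisely where the joint law of the two garbling shocks enters. If the perturbations are statistically independent, as the distinct subscripts on $\mu_i$ and $\epsilon_i$ might suggest, then $\mathbf{E}[\epsilon_1\epsilon_2]=0$ and the noise term collapses to $\lambda^2\big((\beta_1\gamma_1)^2+(\beta_2\gamma_2)^2\big)$. To obtain the stated form $(\lambda(\gamma_1\beta_1+\gamma_2\beta_2))^2=\lambda^2\big((\beta_1\gamma_1)^2+2\beta_1\gamma_1\beta_2\gamma_2+(\beta_2\gamma_2)^2\big)$ the cross term $2\lambda^2\beta_1\gamma_1\beta_2\gamma_2$ must survive, i.e. $\mathbf{E}[\epsilon_1\epsilon_2]=1$, which corresponds to the two regressors sharing a common garbling shock rather than independent ones. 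I would therefore make this assumption on the joint distribution of the perturbations explicit before carrying out the expansion; under it the four contributions assemble exactly into the claimed expression for $\sigma^2$.
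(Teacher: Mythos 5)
Your proposal is correct and follows essentially the same route as the paper: rewrite the garbled process as a linear model in $x_1,x_2$ with slopes $(1+\gamma_i)\beta_i$ and a composite mean-zero noise term uncorrelated with the regressors, invoke consistency of OLS for part~1, and expand $\mathbf{E}[(y-\hat y)^2]$ term by term for part~2. The one place you go beyond the paper is exactly the right place to do so: the paper's proof silently writes the composite noise as $\lambda(\beta_1\gamma_1+\beta_2\gamma_2)\epsilon$ with a \emph{single} standard normal $\epsilon$ --- i.e.\ a common garbling shock --- even though the text introduces the perturbations as independent draws $\mu_1,\mu_2$, and its own footnote gives the $K$-variate noise contribution as $\lambda^2\|\Gamma\|_2^2=\lambda^2\sum_i(\beta_i\gamma_i)^2$, which is the independent-shock answer without the cross term. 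Your observation that the stated form $(\lambda(\gamma_1\beta_1+\gamma_2\beta_2))^2$ requires $\mathbf{E}[\epsilon_1\epsilon_2]=1$, and that this should be made an explicit assumption, therefore flags a genuine internal inconsistency in the paper rather than a gap in your argument: under independent shocks the last term of the proposition should read $\lambda^2\left((\beta_1\gamma_1)^2+(\beta_2\gamma_2)^2\right)$, while the displayed statement and the body of the paper's proof correspond to the common-shock model.
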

\begin{proof}
  As in the proof for proposition \ref{prop:linear}, it is
  straightforward to show that the ML-service's regression model with
  the perturbed inputs is given by
  \begin{align}
      \hat{y} = \alpha + (1+\gamma_1)\beta_1x_1 + (1+\gamma_2)\beta_2x_2
       \nonumber \\ + \lambda(\beta_1\gamma_1+\beta_2\gamma_2)\epsilon \;\;\;\;\;\;
  \end{align}
  where $\epsilon$ is a standard normal random variable.  By the same
  argument as in proposition \ref{prop:linear}, the attacker estimates
  of $\beta_i$ are multiplied by a factor of
  $(1+\gamma_i)$.  Proving $2$ can be derived as follows:
\begin{align}
 \sigma^2 &= \mathbf{E}\left[\left(f_{\theta}(\mathbf{x})-f_{\theta}(g(\mathbf{x}))\right)^2\right]= \mathbf{E}[(y-\hat{y})^2] \nonumber \\
 &= \mathbf{E}\bigg[\bigg(\alpha + \beta_1 x_1 + \beta_2x_2 \; -  \nonumber \\
   & \;\;\;\;\;\;\; \; \; \; \; \; \; \; \; \bigg(\alpha + \beta_1(1+\gamma_1)x_1 + \beta_2(1+\gamma_2)x_2 \; + \nonumber \\
   &\;\;\;\;\;\;\; \; \; \; \; \; \; \; \; \; \; \; \; \; \; \; \;\lambda(\beta_1\gamma_1+\beta_2\gamma_2)\epsilon\bigg)\bigg)^2\bigg] \nonumber \\
 & = \mathbf{E}\bigg[\bigg((\beta_1\gamma_1)^2x_1^2 + (\beta_2\gamma_2)^2x_2^2 + 2\beta_1\beta_2\gamma_1\gamma_2x_1x_2 \; + \nonumber \\
   & \;\;\;\;\;\;\;\;\;\;\;\;\;\;\;\; (\lambda(\beta_1\gamma_1+\beta_2\gamma_2)\epsilon)^2\bigg)\bigg] \nonumber \\
 & = \bigg[(\beta_1\gamma_1)^2\mathbf{E}[x_1^2] + (\beta_2\gamma_2)^2\mathbf{E}[x_2^2] + \nonumber \\
   & \;\;\;\;\;\;\;\; 2\beta_1\beta_2\gamma_1\gamma_2\mathbf{E}[x_1x_2]\;+ \nonumber \\
   & \;\;\;\;\;\;\;\; (\lambda(\beta_1\gamma_1+\beta_2\gamma_2))^2\mathbf{E}[\epsilon^2]\bigg] \nonumber \\
 & = \bigg[(\beta_1\gamma_1)^2Var(x_1)+ (\beta_2\gamma_2)^2Var(x_2) \; + \nonumber \\
   & \;\;\;\;\;\;\;\; 2\beta_1\beta_2\gamma_1\gamma_2\text{COV}(x_1,x_2)+ \nonumber \\
   & \;\;\;\;\;\;\;\;(\lambda(\beta_1\gamma_1+\beta_2\gamma_2))^2\bigg]
\end{align}
where the last line follows because $x_i$ and $\epsilon$ are independent and
$\epsilon$ is a standard normal variable.\footnote{It is
  straightforward to show that for the $K$ variate case, $\sigma^2 =
  \Gamma^T\Sigma\Gamma + \lambda^2||\Gamma||_2^2$ where $\Gamma$ is a $K\times
  1$ vector where the $i$'th element is $\beta_i\gamma_i$ and $\Sigma$
  is the covariance matrix of $X$.}

\end{proof}

The key difference between the simple univariate regression model and
the multiple regression case is that the ML-services prediction error
is influenced by the covariance of the regressors.  This leads to
several insights.  First, if the covariance is $0$ and the regressors
are uncorrelated, then the prediction error is the sum of the
prediction errors when only one of the regressors is perturbed.  More
importantly, for a non-zero covariance, the prediction error can be
less than the sum of the prediction error when only one variable is
perturbed.

To understand how the covariance of $x_1$ and $x_2$ impacts the
prediction error consider the following example.  Suppose that
$\beta_1,\beta_2, \gamma_1$ and $\gamma_2$ are positive and $x_1$ and
$x_2$ are negatively correlated so that the covariance is less than
$0$. Also, recall how the garbling function works.  In general, the
perturbations are such that, on average, positive values of $x_1$ and
$x_2$ are increased and negative values are decreased.  Furthermore,
due to the negative covariance, when $x_1$ is relatively high (and
positive), $x_2$ is relatively low (negative).  So, when a high value
of $x_1$ is perturbed, it is, on average, increased.  Since, by
assumption, $\beta_1$ is positive, the estimated output of $y$ would
be above the predicted value when there is no perturbation.  However,
at the same time $x_2$ is perturbed and, on average, decreased.  Since
by assumption, $\beta_2>0$, the estimated output of $y$ would be
\emph{below} the predicted value when there is no perturbation.  Or in
other words, the perturbation in $x_2$ cancels out some of the error
due to the perturbation in $x_1$.

The impact of the covariance is not limited to the case of negative
covariances.  Since $\gamma_1$ and $\gamma_2$ are free parameters, a
defender that sets $\gamma_1$ and $\gamma_2$ to ensure a certain
relative prediction error can choose the sign of $\gamma_1$ and
$\gamma_2$ so that the relative error in the parameter estimate is
unchanged but $\gamma_1\gamma_2\beta_1\beta_2\emph{COV}(x_1,x_2)<0$.
This ensures that the defender leverages the reduction in prediction
error due to correlated inputs.

\section{Discussion and Future Work}
We used regression models to show that it is possible to strategically
inject error into an ML-service's inputs to guard against an attacker
learning the model parameters.  Importantly, our scenario is different
from other scenarios in the literature as in our case, we assume the
attacker is interested in recovering the ML-service's parameters and
not only replicating the output of the ML-service.

For the univariate linear regression model, we introduced a simple
garbling function that transformed the inputs before being sent to the
ML-service.  We illustrated the trade-off between the attacker's
error in the estimation of the model's parameters and the error in the
machine learning model's output against what its output would be if it
didn't perturb the inputs.  Under our garbling function, the
(squared) prediction error scales quadratically in the error of the
estimate.

We showed that our results extend to the case of logistic regression.
However, while the scaling of the prediction and estimate error are
the same when considering the log-odds ratios, the relationship does
not hold when considering the error in the outputted probability.
Specifically, since our garbling function injects most of the noise in
the tails of the logistic curve, the garbling has little effect on the
prediction error of the ML-service.  While this is a simple example
and will likely not replicate for all types of ML-services, it shows that
carefully chosen garbling functions in non-linear models may perform
better than the same garbling function in a linear model.  Finally, we
showed how a defender can leverage correlation among the inputs to
reduce the ML-service's prediction error while not changing the
attacker's relative estimation error.

Our work, which was mainly analytical in nature, is a simple proof of
concept 
and to remain tractable, our results were
limited to simple functional form assumptions.  Real ML-services are
likely to be highly nonlinear (random forests and neural networks are
two primary examples).  An obvious extension to this proof-of-concept
is to investigate how injecting noise impacts prediction and
estimation in these large and complex models.  While analytical
results may be intractable, computational experiments may reveal how
features of the garbling function, for example, impact the trade-off
between prediction error and estimation error.  This would include an
example using a real data set. 

We also assumed relatively simple attacker models.  Therefore, an
obvious extension to this work is a full game theoretic treatment of a
strategic attacker and defender.  In such a game, the defender would
choose the garbling function and the attacker would form rational
beliefs about the garbling function and choose its estimation strategy
optimally, given its beliefs. This extension would both ground our
method in a rigorous decision theoretic framework and also illuminate
``worse case'' scenarios by assuming an infinitely rational attacker.

\bibliographystyle{IEEEtrans}
\bibliography{refs.bib}
\end{document}